\newtheorem{proposition}{Proposition}
\newenvironment{proof}{\noindent \textsc{Proof}\ }{\mbox{}\hfill $\Box$\\}
\newcounter{tempeq}
\begin{document}
\title{Spherical Harmonics $Y_{l}^{m}(\theta,\phi)$: Positive and Negative
Integer Representations of $su(1,1)$ for $l-m$ and $l+m$}
\author{H. Fakhri \thanks{Email: hfakhri@tabrizu.ac.ir} \\
\centerline{\small {\em Department of Theoretical Physics
and Astrophysics, Faculty of Physics, \,}}\\
{\small {\em University of Tabriz, P O Box 51666-16471, Tabriz, Iran
\,}}} \maketitle
\begin{abstract}
\noindent The azimuthal and magnetic quantum numbers of spherical
harmonics $Y_{l}^{m}(\theta,\phi)$ describe quantization
corresponding to the magnitude and $z$-component of angular momentum
operator in the framework of realization of $su(2)$ Lie algebra
symmetry. The azimuthal quantum number $l$ allocates to itself an
additional ladder symmetry by the operators which are written in
terms of $l$. Here, it is shown that simultaneous realization of the
both symmetries inherits the positive and negative $(l-m)$- and
$(l+m)$-integer discrete irreducible representations for $su(1,1)$
Lie algebra via the spherical harmonics on the sphere  as a compact
manifold. So, in addition to realizing the unitary irreducible
representation of $su(2)$ compact Lie algebra via the
$Y_{l}^{m}(\theta,\phi)$'s for a given $l$, we can also represent
$su(1,1)$ noncompact Lie algebra by spherical harmonics for given
values of $l-m$ and $l+m$.
\\
\\
{PACS Nos:} 03.65.-w; 02.20.-a; 03.65.Fd; 02.20.Qs; 02.20.Sv;
02.30.Hq
\\ {Keywords:}
Quantum Mechanics, Spherical Harmonics, Orbital Angular Momentum Operator,
$su(2)$ and $su(1,1)$ Lie Algebras, Differential Linear Operators
\end{abstract}

\section{Introduction}
The set of principal, azimuthal, magnetic and spin quantum numbers
describe the unique quantum state of a single electron for any
system in which the potential depends only on the radial coordinate.
The labels $l$ and $m$ of the usual complex spherical harmonics
$Y_{l}^{m}(\theta,\phi)$ are the second and the third numbers of
this set. Spherical surface harmonics are an orthonormal set of
vibration solutions for eigenvalue equation of the Laplace-Beltrami
operator on the sphere $S^2$ as a compact Riemannian manifold. They
also form the wave functions which  represent the orbital angular
momentum operator ${\bf L}={\bf r}\times {\bf p}=-i{\bf r}\times
{\bf \nabla}$ ($\hbar=1$), and have a wide range of applications in
theoretical and applied physics
\cite{Macrobert,Infeld,Munakata,Varshalovich}. The three components
of the angular momentum operator, i.e. $L_x$, $L_y$ and $L_z$, are
Killing vector fields that generate the rotations about $x-$, $y-$
and $z-$axes, respectively. All spherical harmonics with the given
quantum number $l$ form a unitary irreducible representation of
$su(2)\cong so(3)$ Lie algebra. In fact, they can be seen as
representations of the $SO(3)$ symmetry group of rotations about a
point and its double-cover $SU(2)$. It can also be noted that the
spherical harmonics $Y_{l}^{m}(\theta,\phi)$ are just the
independent components of symmetric traceless tensors of rank $l$.
The properties of the spherical harmonics are well known, and may be
found in many texts and papers (for example see Refs.
\cite{Rose,Merzbacher,Beers,Dixon,Beig}).

In spite of the fact that the problem of quantization of particle
motion on a sphere is 80 years old, there still exist some open
questions concerning the symmetry properties of the bound states.
The aim of this work is to introduce new symmetries based on the
quantization of both azimuthal and magnetic numbers $l$ and $m$ of
the usual spherical harmonics $Y_{l}^{m}(\theta,\phi)$. In order to
provide the necessary background and also to attribute a
quantization relation for azimuthal quantum number $l$, here, we
present some basic facts about the spherical harmonics
\cite{Infeld,Munakata}. In section 2, with the application of
angular momentum operator, we review realization of the unitary
irreducible representations of $su(2)$ Lie algebra on the sphere in
terms of spherical harmonics by shifting $m$ only. In section 3, the
representations of the ladder symmetry with respect to azimuthal
quantum number $l$ are constructed in terms of a pair of ladder
operators, and its corresponding quantization relation is also
expressed as an operator identity originated from solubility in the
framework of supersymmetry and shape invariance theories. In section
4, these results are applied to show that $su(1,1)$ Lie algebra can
also be represented irreducibly by using spherical harmonics.
Finally, in section 5 we discuss the results and make some final
comments.

\section{The unitary irreducible representations of $su(2)$ Lie algebra via orbital
angular momentum operator} This section covers the standard and the
well-known formalism of $su(2)$ commutation relations in order to
encounter spherical harmonics. In what follows, we describe points
on $S^2$ using the parametrization $(x=r\sin\theta \cos\phi,
y=r\sin\theta \sin\phi, x=r\cos\theta)$ where $0\leq \theta <\pi$ is
the polar (or co-latitude) angle and $0\leq \phi<2\pi$ is the
azimuthal (or longitude) angle. For a given $l$ with the lower bound
$l\geq 0$, we define the $(2l+1)$-dimensional Hilbert space
\begin{eqnarray}
{\cal H}_l:=\mbox{span}\left\{Y_{l}^{m}(\theta, \phi)\right\}_{\hspace{1mm} -l \leq m\leq l}\,,
\end{eqnarray}
with the spherical harmonics as bases:
\begin{eqnarray}
&&\hspace{-15mm}Y_{l}^{m}(\theta, \phi)=\frac{(-1)^{m}}{2^{l} \,\Gamma(l+1)}\,
\sqrt{\frac{(2l+1)\Gamma(l+m+1)}{4\pi\Gamma(l-m+1) }}\left(\frac{e^{i\phi}}{\sin\theta}\right)^{m}
\left(\frac{1}{\sin\theta}\frac{d}{d\theta}\right)^{l-m}(\sin\theta)^{l}.
\end{eqnarray}
Also, the infinite dimensional Hilbert space ${\cal H}=L^2(S^2,d\Omega(\theta,\phi))$ is defined
as a direct sum of finite dimensional subspaces:
${\cal H}=\oplus_{l=0}^{+\infty}{\cal H}_l$. We must emphasize that
the bases of ${\cal H}$ are independent spherical harmonics with different
values for both indices $l$ and $m$. The
spherical harmonics as the bases of ${\cal H}$  constitute an
orthonormal set with respect to the following inner product over the sphere $S^2$:
\begin{eqnarray}
\int_{S^2} {Y_{l}^{m}}^*(\theta, \phi)
Y_{l^{\prime}}^{m^{\prime}}(\theta, \phi)d\Omega(\theta,\phi)=
\delta_{l\, l^{\prime}}\, \delta_{m\, m^{\prime}}.
\end{eqnarray}
Therefore, similar to the Fourier expansion, they can be used to expand
any arbitrary square integrable function of latitude and longitude angles.
$d\Omega(\theta,\phi)=d\cos\theta d\phi$ is the natural invariant measure (area) on the sphere $S^2$.
The following Proposition is an immediate consequence of
the raising and lowering relations of the index $m$ of the associated Legendre functions
\cite{Infeld,Varshalovich,Arfken}.
\begin{proposition}
Let us introduce three differential generators  $L_{+}$\,, $L_{-}$
and $L_{z}$ on the sphere $S^2$, corresponding to the orbital
angular momentum operator ${\bf L}$ as
\begin{eqnarray}
L_{\pm}=e^{\pm i\phi}\left(\pm \frac{\partial}{\partial \theta}
+i\cot\theta \frac{\partial}{\partial \phi} \right)\,,
\hspace{30mm} L_{z}=-i \frac{\partial}{\partial \phi}\,.
\end{eqnarray}
They satisfy the commutation relations of $su(2)$ Lie algebra as follows
\begin{eqnarray}
[L_{+}\, , L_{-} ]=2L_{z}\,, \hspace{30mm}[L_{z}\, , L_{\pm}]=\pm L_{\pm}\,.
\end{eqnarray}
$L_{z}$ is a self-adjoint operator, and two operators $L_{+}$ and
$L_{-}$ are Hermitian conjugate of each other with respect to the
inner product (3). Each of the Hilbert subspaces ${\cal H}_{l}$
realizes an $l-$integer unitary irreducible representation of
$su(2)$ Lie algebra as
\renewcommand\theequation{\arabic{equation}}
\setcounter{equation}{\value {tempeq}}
\setcounter{tempeq}{\value{equation}}
\renewcommand\theequation{\arabic{tempeq}\alph{equation}}
\setcounter{equation}{0} \addtocounter{tempeq}{6}
\begin{eqnarray}
&&L_{+}Y_{l}^{m-1}(\theta, \phi)=
\sqrt{(l-m+1)(l+m)}\,Y_{l}^{m}(\theta, \phi) \\
&&L_{-}Y_{l}^{m}(\theta, \phi)=
\sqrt{(l-m+1)(l+m)}\,Y_{l}^{m-1}(\theta, \phi) \\
&&L_{z}Y_{l}^{m}(\theta, \phi)=m Y_{l}^{m}(\theta, \phi)\,.
\end{eqnarray}
The Hilbert subspace ${\cal H}_{l}$ contains the lowest and highest
bases
\renewcommand\theequation{\arabic{equation}}
\setcounter{equation}{\value {tempeq}}
\begin{eqnarray}
Y_{l}^{\mp l}(\theta, \phi)=
\frac{\sqrt{\Gamma(2l+2)}}{\sqrt{\pi}\, 2^{l+1}\, \Gamma(l+1)}
\left(\sin\theta\right)^{l}\frac{e^{\mp i l \phi}}{(-1)^{\frac{l}{2}\mp \frac{l}{2}}}
\end{eqnarray}
with the lowest and highest weights $-l$ and $l$, respectively. They
are annihilated by the operators $L_{-}$ and $L_{+}$\,:
$L_{-}Y_{l}^{-l}(\theta, \phi)=0$ and $L_{+}Y_{l}^{l}(\theta,
\phi)=0$. Meanwhile, an arbitrary basis belonging to each of the
Hilbert subspaces ${\cal H}_{l}$ can be calculated by an algebraic
method as follows
\begin{eqnarray}
\hspace{-6mm}Y_{l}^{m}(\theta, \phi)=
\sqrt{\frac{\Gamma(l\mp m+1)}{\Gamma(2l+1)\Gamma(l\pm m+1)}}\left(L_{\pm}\right)^{l\pm m}
Y_{l}^{\mp l}(\theta, \phi)\,
\hspace{20mm}-l \leq m\leq l\,.
\end{eqnarray}
 Also, the Casimir operator corresponding to the generators (4),
i.e.
\begin{eqnarray}
{\bf L}^2_{\bf su(2)}
=L_{+} L_{-}+ L_{z}^2-L_{z}\,,
\end{eqnarray}
is a self-adjoint operator and has a $(2l+1)$-fold degeneracy on
${\cal H}_{l}$ as
\begin{eqnarray}
\hspace{10mm}{\bf L}^2_{\bf su(2)}Y_{l}^{m}(\theta, \phi)=
l(l+1)Y_{l}^{m}(\theta, \phi) \hspace{18mm}-l \leq m\leq l.
\end{eqnarray}
\end{proposition}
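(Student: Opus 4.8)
\medskip
\noindent\textbf{Proof strategy.} The proposition collects several assertions, and my plan is to take them in the following order. (i) The $su(2)$ commutation relations (5) are a routine differential-operator check: acting with $L_{+}L_{-}-L_{-}L_{+}$ and with $L_{z}L_{\pm}-L_{\pm}L_{z}$ on an arbitrary smooth function on $S^{2}$ and using $\partial_{\theta}\partial_{\phi}=\partial_{\phi}\partial_{\theta}$, the second-order terms cancel and what survives is $2L_{z}$ and $\pm L_{\pm}$, respectively. (ii) For the adjointness statements I would work with the invariant measure $d\Omega=\sin\theta\,d\theta\,d\phi$: self-adjointness of $L_{z}=-i\partial_{\phi}$ is immediate from integration by parts in $\phi$ over $[0,2\pi)$ with periodic boundary values, while for $\langle f,L_{\pm}g\rangle$ one integrates by parts once in $\phi$ (no boundary term) and once in $\theta$ (the boundary term carries a factor $\sin\theta$ and hence vanishes at $\theta=0,\pi$), arriving at $\langle L_{\mp}f,g\rangle$; thus $L_{\pm}^{\dagger}=L_{\mp}$ and $L_{z}^{\dagger}=L_{z}$ with respect to the inner product (3).

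The core is the shift relation (6b); relation (6c) is trivial since $L_{z}Y_{l}^{m}=-i\partial_{\phi}Y_{l}^{m}=mY_{l}^{m}$, and (6a) will follow from (6b) by adjointness. To prove (6b) I would substitute the closed form (2) and use $L_{-}\bigl(e^{im\phi}h(\theta)\bigr)=e^{i(m-1)\phi}\bigl(-h'(\theta)-m\cot\theta\,h(\theta)\bigr)$, which turns (6b) into an identity among functions of $\theta$ alone. After removing the factor $e^{im\phi}$, the $\theta$-factor of $Y_{l}^{m}$ is a fixed power of $\sin\theta$ times $f_{l}^{m}(\theta):=\bigl(\tfrac{1}{\sin\theta}\tfrac{d}{d\theta}\bigr)^{l-m}g$ for a single, $m$-independent function $g$, and the decisive point is the elementary identity $f_{l}^{m-1}=\tfrac{1}{\sin\theta}\tfrac{d}{d\theta}f_{l}^{m}$ — one of the lowering recurrences of the associated Legendre functions referred to just before the statement. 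A short computation then gives $L_{-}Y_{l}^{m}=(C_{l}^{m}/C_{l}^{m-1})\,Y_{l}^{m-1}$, where $C_{l}^{m}$ is the normalization prefactor in (2), and the Gamma-function identities $\Gamma(l+m+1)=(l+m)\Gamma(l+m)$, $\Gamma(l-m+2)=(l-m+1)\Gamma(l-m+1)$ collapse this ratio — the residual sign being absorbed by the $(-1)^{m}$ in (2) — to exactly $\sqrt{(l-m+1)(l+m)}$. Given (6b) and step (ii), relation (6a) is immediate: $\langle Y_{l}^{m},L_{+}Y_{l}^{m-1}\rangle=\langle L_{-}Y_{l}^{m},Y_{l}^{m-1}\rangle=\sqrt{(l-m+1)(l+m)}$, while the orthonormality (3) leaves $L_{+}Y_{l}^{m-1}$ no other components. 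I expect the $\theta$-computation behind (6b) to be the one step needing genuine care — not conceptually, but because the two Gamma-ratios and the phase must be tracked so that the coefficient comes out symmetric under $L_{+}\leftrightarrow L_{-}$, as consistency with (ii) demands.

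The rest is formal. Setting $m=\pm l$ in (2) makes the Rodrigues operator trivial, and a direct evaluation reproduces the extremal bases (7); the annihilation identities $L_{-}Y_{l}^{-l}=0$ and $L_{+}Y_{l}^{l}=0$ then follow either by differentiating (7), or simply because the coefficient $\sqrt{(l-m+1)(l+m)}$ in (6b) vanishes at $m=-l$ and its counterpart in (6a) vanishes at $m=l+1$. Formula (8) is obtained by iterating (6a) (respectively (6b)) $l\pm m$ times starting from $Y_{l}^{\mp l}$ and multiplying the coefficients; their product rearranges into $\sqrt{\Gamma(2l+1)\,\Gamma(l\pm m+1)/\Gamma(l\mp m+1)}$, the reciprocal of the stated prefactor. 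Finally, the Casimir ${\bf L}^{2}_{\bf su(2)}=L_{+}L_{-}+L_{z}^{2}-L_{z}$ is self-adjoint because $L_{+}^{\dagger}=L_{-}$ and $L_{z}^{\dagger}=L_{z}$, and applying it to $Y_{l}^{m}$ through (6a)--(6c) gives the eigenvalue $(l-m+1)(l+m)+m^{2}-m=l(l+1)$, which is independent of $m$ — hence the $(2l+1)$-fold degeneracy on ${\cal H}_{l}$. Since the operators $L_{\pm}$ connect all of $Y_{l}^{-l},\dots,Y_{l}^{l}$ and are mutual adjoints, each ${\cal H}_{l}$ carries a unitary, irreducible $(2l+1)$-dimensional representation of $su(2)$, which is the assertion of the proposition.
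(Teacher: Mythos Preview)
Your proposal is correct and aligned with the paper's approach: the paper does not give a detailed proof of this proposition at all, stating only that it ``is an immediate consequence of the raising and lowering relations of the index $m$ of the associated Legendre functions'' with citations to \cite{Infeld,Varshalovich,Arfken}, and your argument unpacks precisely that --- deriving (6b) from the Legendre lowering recurrence applied to the explicit form (2), then obtaining (6a) by adjointness and the remaining items by routine algebra. In effect you have supplied the details the paper leaves to the literature.
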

Obviously, the representation of the $su(2)$ Lie algebra in the
Hilbert space ${\cal H}$ via equations (6) is reducible.

A given unitary irreducible representation is characterized by the
index $l$. The spherical harmonics $Y_{l}^{m}(\theta, \phi)$, via
their $m$ index, describe quantization corresponding to commutation
relations of the three components of orbital angular momentum
operator. $L_{z}=-i \frac{\partial}{\partial \phi}$ is always a
Killing vector field which corresponds to an angular momentum about
the body-fixed $z$-axis. The Casimir operator ${\bf L}^2_{su(2)}$
along with the Cartan subalgebra generator $L_{z}$ describes the
Hamiltonian of a free particle on the sphere with dynamical symmetry
group $SU(2)$ and $(2l+1)$-fold degeneracy for the energy spectrum.
It must be emphasized that the spherical harmonics and their
mathematical structure, as given by Proposition 1, are playing a
more visible and important role in different branches of physics.
The Proposition 1 implies that the spherical harmonics are created
by orbital angular momentum operator. Schwinger has developed the
realization of this Proposition in the framework of creation and
annihilation operators of two-dimensional isotropic oscillator
\cite{Schwinger}.

\section{Ladder symmetry for the azimuthal quantum number $l$}
It is evident that simultaneous realization of laddering relations with respect to two different
parameters $l$ and $m$ of the associated Legendre functions gives
us the possibility to represent laddering relations with respect to the azimuthal quantum number $l$
of spherical harmonics. Representation of such ladder symmetry
by the spherical harmonics $Y_{l}^{m}(\theta, \phi)$ with the same $m$ but different $l$
induces a new splitting on the Hilbert space ${\cal H}$:
\begin{eqnarray}
{\cal H}=\oplus_{m=-\infty}^{+\infty}{\cal H}_m \hspace{20mm} \mbox{with} \hspace{5mm}
{\cal H}_m:=\mbox{span}\left\{Y_{l}^{m}(\theta, \phi)\right\}_{\hspace{1mm}  l\geq \left|m\right|}\,.
\end{eqnarray}
The following Proposition provides an alternative characterization
of the mathematical structure of spherical harmonics.
\begin{proposition}
Let us define two first-order differential operators on the sphere
$S^2$
\begin{eqnarray}
&&J_{\pm}(l)=\pm \sin\theta \frac{\partial}{\partial \theta}+l\cos\theta \,.
\end{eqnarray}
They satisfy the following operator identity in the framework of
shape invariance theory
\begin{eqnarray}
J_{-}(l+1)J_{+}(l+1)-J_{+}(l)J_{-}(l)=2l+1\,.
\end{eqnarray}
$J_{\pm}(l\pm 2)$ are the adjoint of the operators $J_{\mp}(l)$ with
respect to the inner product (3), i.e. we have
$J_{\mp}^{\dagger}(l)=J_{\pm}(l\pm 2)$. Each of the Hilbert
subspaces ${\cal H}_{m}$ realize the semi-infinite raising and
lowering relations with respect to $l$ as
\renewcommand\theequation{\arabic{equation}}
\setcounter{equation}{\value {tempeq}}
\setcounter{tempeq}{\value{equation}}
\renewcommand\theequation{\arabic{tempeq}\alph{equation}}
\setcounter{equation}{0} \addtocounter{tempeq}{8}
\begin{eqnarray}
&&\hspace{-18mm}J_{+}(l)Y_{l-1}^{m}(\theta, \phi)=
\sqrt{\frac{2l-1}{2l+1}(l-m)(l+m)}\,Y_{l}^{m}(\theta, \phi)
\hspace{17mm}l\geq \left|m\right|+1 \\
&&\hspace{-18mm}J_{-}(l)Y_{l}^{m}(\theta, \phi)=
\sqrt{\frac{2l+1}{2l-1}(l-m)(l+m)}\,Y_{l-1}^{m}(\theta, \phi)\hspace{17mm}l\geq \left|m\right| \,.
\end{eqnarray}
The lowest bases, i.e.
\renewcommand\theequation{\arabic{equation}}
\setcounter{equation}{\value {tempeq}}
\begin{eqnarray}
Y_{\pm m}^{m}(\theta, \phi)=\frac{(-1)^{-\frac{m}{2}\mp\frac{m}{2}}}{2^{\pm m}\Gamma(1\pm m)}
\sqrt{\frac{\Gamma(2\pm 2m)}{4\pi}}\, e^{im\phi}
\left(\sin\theta\right)^{\pm m}\,,
\end{eqnarray}
belonging to the Hilbert subspaces ${\cal H}_{m}$  with $m\geq 0$
and $m\leq 0$, are respectively annihilated by $J_-(m)$ and
$J_-(-m)$  as $J_-(m)Y_{m}^{m}(\theta, \phi)=0$ and
$J_-(-m)Y_{-m}^{m}(\theta, \phi)=0$. Meanwhile, an arbitrary basis
belonging to each of the Hilbert subspaces ${\cal H}_{m}$ with
$m\geq 0$ and $m\leq 0$, can be calculated by the algebraic method:
\begin{eqnarray}
&&\hspace{-17mm}Y_{l}^{m}(\theta, \phi)= \nonumber \\
&&\hspace{-17mm}\sqrt{\frac{(2l+1)\Gamma(1\pm 2m)}{(1\pm 2m)\Gamma(l-m+1)\Gamma(l+m+1)}}
\, J_{+}(l) J_{+}(l-1) \cdots J_{+}(1\pm m)Y_{\pm m}^{m}(\theta, \phi)\,.
\end{eqnarray}
\end{proposition}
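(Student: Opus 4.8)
The plan is to establish the assertions of Proposition~2 in the order: the operator identity $J_{-}(l+1)J_{+}(l+1)-J_{+}(l)J_{-}(l)=2l+1$; the adjointness $J_{\mp}^{\dagger}(l)=J_{\pm}(l\pm2)$; the two laddering relations for the $Y_{l}^{m}$'s; and finally the closed form of the lowest bases and the algebraic construction formula, the last two being corollaries of the laddering relations. The first two steps are routine. For the operator identity I would expand both $J_{-}(l+1)J_{+}(l+1)$ and $J_{+}(l)J_{-}(l)$ as second-order differential operators in $\theta$; using $\partial_{\theta}\sin\theta=\cos\theta$ one sees that their $\partial_{\theta}^{2}$- and $\partial_{\theta}$-parts coincide, while the multiplication parts are $(l+1)\sin^{2}\theta+(l+1)^{2}\cos^{2}\theta$ and $-l\sin^{2}\theta+l^{2}\cos^{2}\theta$, whose difference is $(2l+1)(\sin^{2}\theta+\cos^{2}\theta)=2l+1$. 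For the adjointness, since $J_{\pm}(l)$ involve no $\phi$ and $d\Omega=\sin\theta\,d\theta\,d\phi$, a single integration by parts in $\theta$ against the weight $\sin\theta$ does it: the boundary terms carry a factor $\sin^{2}\theta$ and vanish at $\theta=0,\pi$, giving $\langle J_{+}(l)f,g\rangle=\langle f,J_{-}(l-2)g\rangle$ and, symmetrically, $J_{-}^{\dagger}(l)=J_{+}(l+2)$. I would note here that $J_{+}(l)$ and $J_{-}(l)$ are consequently \emph{not} mutual adjoints; this is exactly why the two constants in the laddering relations differ by the ratio $(2l-1)/(2l+1)$, and it is the source of the one genuine difficulty below.

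The laddering relations are the core. The key preliminary fact is that on any function $e^{im\phi}g(\theta)$ the composition $J_{+}(l)J_{-}(l)$ factors through the $su(2)$ Casimir: a direct computation gives $J_{+}(l)J_{-}(l)=\sin^{2}\theta\,{\bf L}^2_{su(2)}-l\sin^{2}\theta+l^{2}\cos^{2}\theta-m^{2}$, so the relation ${\bf L}^2_{su(2)}Y_{l}^{m}=l(l+1)Y_{l}^{m}$ of Proposition~1 yields
\[
J_{+}(l)J_{-}(l)\,Y_{l}^{m}(\theta,\phi)=(l-m)(l+m)\,Y_{l}^{m}(\theta,\phi).
\]
Applying $J_{-}(l)$ to this and rewriting $J_{-}(l)J_{+}(l)$ by the operator identity (shifted $l\to l-1$) as $J_{+}(l-1)J_{-}(l-1)+(2l-1)$ shows that $J_{-}(l)Y_{l}^{m}$ is an eigenfunction of $J_{+}(l-1)J_{-}(l-1)$ with eigenvalue $(l-1)^{2}-m^{2}$; reading the preliminary identity backwards at level $l-1$ (dividing by $\sin^{2}\theta$, legitimate away from the poles and then everywhere by continuity) shows it solves ${\bf L}^2_{su(2)}u=(l-1)l\,u$. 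Since $J_{-}(l)$ commutes with $L_{z}=-i\partial_{\phi}$, the continuous $L^{2}$-function $J_{-}(l)Y_{l}^{m}$ is an element of ${\cal H}_{l-1}$ of $L_{z}$-weight $m$, hence — by the simplicity of the $L_{z}$-spectrum of the irreducible representation ${\cal H}_{l-1}$ in Proposition~1 — a scalar multiple $a_{lm}Y_{l-1}^{m}$ (with $a_{lm}=0$ forced when $|m|=l$, matching the vanishing coefficient). The identical argument applied to $J_{+}(l)Y_{l-1}^{m}$, using $J_{-}(l)J_{+}(l)Y_{l-1}^{m}=(l-m)(l+m)Y_{l-1}^{m}$, gives $J_{+}(l)Y_{l-1}^{m}=b_{lm}Y_{l}^{m}$, and composing the two relations forces $a_{lm}b_{lm}=(l-m)(l+m)$.

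Separating $a_{lm}$ and $b_{lm}$ is the step I expect to be the main obstacle, because the usual ``take norms'' device fails here ($J_{\pm}(l)$ not being mutual adjoints). I would recover the missing equation from $J_{+}(l)=\big(J_{-}(l-2)\big)^{\dagger}$: writing $J_{-}(l-2)=J_{-}(l)-2\cos\theta$ and pairing $J_{+}(l)Y_{l-1}^{m}=b_{lm}Y_{l}^{m}$ against $Y_{l-1}^{m}$ gives $b_{lm}=a_{lm}-2\,\langle Y_{l-1}^{m},\cos\theta\,Y_{l}^{m}\rangle=a_{lm}-2\sqrt{(l-m)(l+m)/((2l-1)(2l+1))}$, the last step being the standard expansion of $\cos\theta\,Y_{l}^{m}$ in spherical harmonics. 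Solving the resulting quadratic (the root with the sign appropriate to the convention in~(2)) yields $a_{lm}=\sqrt{\tfrac{2l+1}{2l-1}(l-m)(l+m)}$ and $b_{lm}=\sqrt{\tfrac{2l-1}{2l+1}(l-m)(l+m)}$, which are exactly the claimed coefficients; alternatively one could simply quote the known degree-recurrences of the associated Legendre functions together with~(2). For the lowest bases, the coefficient $(l-m)(l+m)$ vanishes at $l=|m|$, so the lowering chain terminates, $J_{-}(\pm m)Y_{\pm m}^{m}=0$; this first-order equation has solution $(\sin\theta)^{\pm m}$, and with the $L_{z}$-weight $m$ forcing the factor $e^{im\phi}$ and the normalization $\int_{0}^{\pi}(\sin\theta)^{2|m|+1}\,d\theta=2^{2|m|+1}(|m|!)^{2}/(2|m|+1)!$, one recovers the stated $Y_{\pm m}^{m}$ up to the phase inherited from~(2). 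Finally, the algebraic formula follows by iterating the raising relation from $l=|m|$ to $l$: the product $\prod_{k=|m|+1}^{l}\tfrac{2k-1}{2k+1}$ telescopes to $\tfrac{2|m|+1}{2l+1}$ and $\prod_{k=|m|+1}^{l}(k-m)(k+m)$ collapses to $\Gamma(l-m+1)\Gamma(l+m+1)/\Gamma(1\pm2m)$, and together these reassemble into the square-root prefactor appearing in the statement.
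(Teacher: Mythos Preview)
Your argument is correct and considerably more detailed than the paper's, which is a single sentence: ``The proof follows immediately from the raising and lowering relations of the index $l$ of the associated Legendre functions'' (citing Infeld--Hull). What you do differently is to \emph{derive} most of the structure intrinsically: the shape-invariance identity and the adjointness by direct computation, and the fact that $J_{\pm}(l)$ map between ${\cal H}_{l-1}$ and ${\cal H}_{l}$ by relating $J_{+}(l)J_{-}(l)$ to $\sin^{2}\theta\,{\bf L}^{2}_{su(2)}$ and invoking Proposition~1. This is a nice conceptual gain, because it shows that the intertwining property of $J_{\pm}(l)$ is forced by the shape-invariance relation and the Casimir, rather than being a quoted special-function identity. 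The only place you still appeal to a Legendre-type fact is in separating $a_{lm}$ and $b_{lm}$, where you use the three-term recursion $\cos\theta\,Y_{l}^{m}=\alpha Y_{l+1}^{m}+\beta Y_{l-1}^{m}$; this is equivalent in content to (though formally different from) the derivative recursions that the paper cites, so your route and the paper's converge at that step. Your acknowledgement that ``alternatively one could simply quote the known degree-recurrences'' is exactly the paper's approach. Two minor points worth tightening: the choice of root in your quadratic for $a_{lm}$ is resolved not by ``convention'' alone but by checking one explicit instance (e.g.\ the leading behavior of $J_{-}(l)Y_{l}^{m}$ near $\theta=0$ against formula~(2)), and the realness of $a_{lm},b_{lm}$, which you use implicitly in writing $b_{lm}=a_{lm}-2\langle\cdots\rangle$, deserves a one-line justification (the $J_{\pm}(l)$ are real and $Y_{l}^{m}=e^{im\phi}\times(\text{real})$ in the convention~(2)).
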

\begin{proof}
The proof follows immediately from the raising and lowering
relations of the index $l$ of the associated Legendre functions
\cite{Infeld}.
\end{proof}
According to the $-l\leq m\leq +l$ limitation obtained from the
commutation relations of $su(2)$, $2l+1$ must be an odd and even
nonnegative integer for the orbital and spin angular momenta,
respectively. Although the relation (13) is identically satisfied
for any constant number $l$, however, it is represented only via the
nonnegative integers $l$ (odd positive integer values for $2l+1$) of
spherical harmonics $Y_{l}^{m}(\theta, \phi)$. This is an essential
difference with respect to the spin angular momentum. In fact, the
relation (13) distinguishes the orbital angular momentum from the
spin one. It also implies that the number of independent components
of spherical harmonics of a given irreducible representation $l$ of
$su(2)$ Lie algebra, i.e. $2l+1$, is derived by the shift operators
corresponding to the azimuthal quantum number $l$. If we take the
adjoint of the equation (13), we obtain
$J_{-}(l-1)J_{+}(l+3)-J_{+}(l+2)J_{-}(l-2)=2l+1$, which  is
identically satisfied. Thus, Proposition 2 presents a symmetry
structure, called ladder symmetry with respect to the azimuthal
quantum number $l$ of spherical harmonics. Note that, indeed, the
identical equality (13) has been originated from a brilliant theory
in connection with geometry and physics named supersymmetry. In
other words, although contrary to $L_+$ and $L_-$, the two operators
$J_{+}(l)$ and $J_{-}(l)$ do not contribute in a set of closed
commutation relations, however, the operator identity (13) for them
can be interpreted as a quantization relation in the framework of
shape invariance symmetry (for reviews about supersymmetric quantum
mechanics and shape invariance, see Refs.
\cite{Witten1,Witten2,Witten3,Alvarez,Turbiner,Khare}). Thus, the
operators $J_{+}(l)$ and $J_{-}(l)$ describe quantization of the
azimuthal quantum number $l$ which, in turn, lead to the
presentation of a different algebraic technique from (8), in order
to create the spherical harmonics $Y_{l}^{m}(\theta, \phi)$,
according to (16). Furthermore, spherical harmonics belonging to the
Hilbert subspaces ${\cal H}_{l}$ have parity $(-1)^{l}$, since ${\bf
L}$ commutes with the parity operator. Thus, the operators
$J_{+}(l)$ and $J_{-}(l)$ can be interpreted as the interchange
operators of parity: $J_{+}(l): {\cal H}_{l-1} \rightarrow {\cal
H}_{l}$ and $J_{-}(l): {\cal H}_{l} \rightarrow {\cal H}_{l-1}$.

\section{Positive and negative integer irreducible representations
of $u(1,1)$ for $l\mp m$} The laddering equations (6a) and (6b) as
well as (14a) and (14b), which describe shifting the indices $m$ and
$l$ separately, lead to the derivation of two new types of
simultaneous ladder symmetries with respect to the both azimuthal
and magnetic quantum numbers of spherical harmonics. Our proposed
ladder operators for simultaneous shift of $l$ and $m$ are
of first-order differential type, contrary to Ref. \cite{Infeld}.
They lead to a new perspective on the two quantum numbers $l$ and
$m$ in connection with realization of $u(1,1)$ (consequently, $su(1,1)$)
Lie algebra which in turn is accomplished by all spherical harmonics
$Y_{l}^{m}(\theta, \phi)$ with constant values for $l-m$ and $l+m$, separately.
First, it should be pointed out that the Hilbert space ${\cal H}$ can be
split into the infinite direct sums of infinite dimensional Hilbert
subspaces in two different ways as follows
\renewcommand\theequation{\arabic{equation}}
\setcounter{equation}{\value {tempeq}}
\setcounter{tempeq}{\value{equation}}
\renewcommand\theequation{\arabic{tempeq}\alph{equation}}
\setcounter{equation}{0} \addtocounter{tempeq}{3}
\begin{eqnarray}
&&\hspace{-10mm}{\cal H}=\left(\oplus_{j=0}^{\infty}{\cal H}_{d=2j+1}^{+}\right) {\oplus} \left(\oplus_{k=1}^{\infty} {\cal H}_{d=2k}^{+}\right) \nonumber \\
&&\hspace{50mm}\mbox{with}\hspace{3mm}\left\{\begin{array}{ll}
{\cal H}_{d=2j+1}^+=
\mbox{span}\left\{Y_{m+2j}^{m}(\theta, \phi)\right\}_{ m\geq -j }\\\\
{\cal H}_{d=2k}^+=
\mbox{span}\left\{Y_{m+2k-1}^{m}(\theta, \phi)\right\}_{ m\geq 1-k }
\end{array}\right. \\
&&\hspace{-10mm}{\cal H}=\left(\oplus_{j=0}^{\infty}{\cal H}_{s=2j+1}^-\right) {\oplus}
\left(\oplus_{k=1}^{\infty} {\cal H}_{s=2k}^-\right) \nonumber \\
&&\hspace{50mm}\mbox{with}\hspace{3mm}\left\{\begin{array}{ll}
{\cal H}_{s=2j+1}^-=
\mbox{span}\left\{Y_{-m+2j}^{m}(\theta, \phi)\right\}_{ m\leq j }\\\\
{\cal H}_{s=2k}^-=
\mbox{span}\left\{Y_{-m+2k-1}^{m}(\theta, \phi)\right\}_{ m\leq k-1 }\,.
\end{array}\right.
\end{eqnarray}
The constant values for the expressions $l-m$ and $l+m$ of spherical harmonics have been labeled by
$d-1$ and $s-1$, respectively.
\begin{proposition}
Let us define two new first-order differential operators on the
sphere $S^2$
\renewcommand\theequation{\arabic{equation}}
\setcounter{equation}{\value {tempeq}}
\begin{eqnarray}
K_{\pm}^{d}=e^{\pm i\phi}\left(\pm \cos\theta\frac{\partial}{\partial \theta}
+i\left(\frac{1}{\sin\theta}+\sin\theta\right)\frac{\partial}{\partial \phi}-\left(d-\frac{1}{2}\pm\frac{1}{2}\right)\sin\theta\right)\,.
\end{eqnarray}
They, together with the generators
$K_{z}=L_{z}=-i\frac{\partial}{\partial \phi}$ and $1$, satisfy the
commutation relations of $u(1,1)$ Lie algebra
\begin{eqnarray}
[K_{+}^{d}\, , K_{-}^{d}]=-8K_{z}-4d+2\,, \hspace{30mm}[K_{z}\, , K_{\pm}^{d}]=\pm K_{\pm}^{d}\,.
\end{eqnarray}
$K_{\pm}^{d\pm 2}$ are the adjoint of the operators $K_{\mp}^{d}$
with respect to the inner product (3), i.e. we have
${K_{\mp}^{d}}^{\dagger}=K_{\pm}^{d\pm 2}$. Each of the Hilbert
subspaces ${\cal H}_{d}^{+}$ realizes separately $(d-1)$-integer
irreducible positive representations of $u(1,1)$ Lie algebra
as\footnote{It must be pointed out that, by defining
$S_{z}^{d}:=L_{z}+\frac{d}{2}-\frac{1}{4}$ and
$S_{\pm}^{d}:=\frac{K_{\pm}^{d}}{2}$, the $u(1,1)$ Lie algebra (19)
can be considered as commutation relations corresponding to the
$su(1,1)$ Lie algebra: $[S_{+}^{d}\, , S_{-}^{d}]=-2S_{z}^{d}$ and
$[S_{z}^{d}\, , S_{\pm}^{d}]=\pm S_{\pm}^{d}$. This means that $1$
is a trivial center for the semisimple Lie algebra $u(1,1)$. In Ref.
\cite{Fakhri7}, a short review on the three different real forms
$h_4$, $u(2)$ and $u(1,1)$ of $gl(2,c)$ Lie algebra has been
presented. There, their differences in connection with the structure
constants and their representation spaces have also been pointed
out.}
\renewcommand\theequation{\arabic{equation}}
\setcounter{equation}{\value {tempeq}}
\setcounter{tempeq}{\value{equation}}
\renewcommand\theequation{\arabic{tempeq}\alph{equation}}
\setcounter{equation}{0} \addtocounter{tempeq}{3}
\begin{eqnarray}
&&\hspace{-23mm}K_{+}^{d}Y_{m+d-2}^{m-1}(\theta, \phi)=
\sqrt{\frac{2m+2d-3}{2m+2d-1}(2m+d-2)(2m+d-1)}\,Y_{m+d-1}^{m}(\theta, \phi) \\
&&\hspace{-23mm}K_{-}^{d}Y_{m+d-1}^{m}(\theta, \phi)=
\sqrt{\frac{2m+2d-1}{2m+2d-3}(2m+d-2)(2m+d-1)}\,Y_{m+d-2}^{m-1}(\theta, \phi)  \\
&&\hspace{-23mm}K_{z}Y_{m+d-1}^{m}(\theta, \phi)=m Y_{m+d-1}^{m}(\theta, \phi)\,.
\end{eqnarray}
Also, the Casimir operator corresponding to the generators
$K_{+}^{d}$, $K_{-}^{d}$ and $K_{z}$,
\renewcommand\theequation{\arabic{equation}}
\setcounter{equation}{\value {tempeq}}
\begin{eqnarray}
{\bf K^{d}}^2_{\bf u(1,1)}=K_{+}^{d}K_{-}^{d}-4K_{z}^{2}-2(2d-3)K_{z}\,,
\end{eqnarray}
has an infinite-fold degeneracy on the Hilbert subspace ${\cal
H}_{d}^{+}$ as
\begin{eqnarray}
{\bf K^{d}}^2_{\bf u(1,1)}Y_{m+d-1}^{m}(\theta, \phi)=(d-1)(d-2)Y_{m+d-1}^{m}(\theta, \phi)\,.
\end{eqnarray}
The Hilbert subspaces ${\cal H}_{d=2j+1}^+={\cal D}^{+}(-j)$ and
${\cal H}_{d=2k}^+={\cal D}^{+}(1-k)$ with $j$ and $k$ as
nonnegative and positive integers contain, respectively, the
following lowest bases
\renewcommand\theequation{\arabic{equation}}
\setcounter{equation}{\value {tempeq}}
\setcounter{tempeq}{\value{equation}}
\renewcommand\theequation{\arabic{tempeq}\alph{equation}}
\setcounter{equation}{0} \addtocounter{tempeq}{3}
\begin{eqnarray}
&&\hspace{-20mm}Y_{j}^{-j}(\theta, \phi)=\frac{1}{2^{j}\Gamma(j+1)}\sqrt{\frac{\Gamma(2j+2)}{4\pi}}\,e^{-ij\phi}
\left(\sin\theta\right)^j \\
&&\hspace{-20mm}Y_{k}^{1-k}(\theta, \phi)=\frac{1}{2^{k+\frac{1}{2}}\Gamma(k+1)}\sqrt{\frac{k\Gamma(2k+2)}{\pi}}\,e^{i(1-k)\phi}
\left(\sin\theta\right)^{k-1}\cos\theta\,.
\end{eqnarray}
They are annihilated as $K_{-}^{2j+1}Y_{j}^{-j}(\theta, \phi)=0$ and
$K_{-}^{2k}Y_{k}^{1-k}(\theta, \phi)=0$, and also have the lowest
weights $-j$ and $1-k$. Meanwhile, the arbitrary bases of the
Hilbert subspaces ${\cal H}_{d=2j+1}^+$ and ${\cal H}_{d=2k}^+$ can
be respectively calculated by the algebraic methods as
\renewcommand\theequation{\arabic{equation}}
\setcounter{equation}{\value {tempeq}}
\setcounter{tempeq}{\value{equation}}
\renewcommand\theequation{\arabic{tempeq}\alph{equation}}
\setcounter{equation}{0} \addtocounter{tempeq}{1}
\begin{eqnarray}
&&Y_{m+2j}^{m}(\theta, \phi)=
\frac{\left(K_{+}^{2j+1}\right)^{m+j}Y_{j}^{-j}(\theta, \phi)}
{\sqrt{\frac{(2j+1)\Gamma(2m+2j+1)}{2m+4j+1}}} \hspace{30mm}m\geq -j  \\
&&Y_{m+2k-1}^{m}(\theta, \phi)=
\frac{\left(K_{+}^{2k}\right)^{m+k-1}Y_{k}^{1-k}(\theta, \phi)}
{\sqrt{\frac{(2k+1)\Gamma(2m+2k)}{2m+4k-1}}} \hspace{20mm}m\geq 1-k\,.
\end{eqnarray}
\end{proposition}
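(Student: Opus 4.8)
The plan is to derive the whole Proposition from Propositions 1 and 2 supplemented by a handful of elementary manipulations of first-order differential operators in $\theta$ and $\phi$; the only genuinely computational point is the non-diagonal commutator $[K_+^d,K_-^d]$. First note that $K_zY_{m+d-1}^m=mY_{m+d-1}^m$ is immediate from $K_z=-i\partial_\phi$ and $Y_l^m\propto e^{im\phi}$, and that $[K_z,K_\pm^d]=\pm K_\pm^d$ follows by writing $K_\pm^d=e^{\pm i\phi}M_\pm^d$ with $M_\pm^d$ independent of $\phi$, since $[L_z,e^{\pm i\phi}]=\pm e^{\pm i\phi}$. For the laddering relations on the $Y$'s I would record the elementary identity
\[
[L_+,J_+(l)]=e^{i\phi}\Big(\cos\theta\,\partial_\theta+\tfrac{i}{\sin\theta}\,\partial_\phi-l\sin\theta\Big),
\]
obtained from the definitions of $L_+$ and $J_+$ using $[\partial_\theta,\sin\theta]=\cos\theta$ and $\partial_\theta\cot\theta=-\csc^2\theta$, and then observe that on ${\cal H}_d^+$, where $l-m$ is frozen at $d-1$, the operator $K_+^d$ coincides with the composite $L_+J_+(l{+}1)-J_+(l{+}1)L_+$ built from the generators of Propositions 1 and 2, the $J$-parameter being the azimuthal index plus one: indeed $K_+^d-[L_+,J_+(l{+}1)]=e^{i\phi}\sin\theta\,(i\partial_\phi+l+1-d)$, which annihilates every $Y_l^m$ with $l-m=d-1$. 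The raising formula then drops out of (14a) followed by (6a), and the analogous combination of $L_-$ and $J_-$ gives the lowering formula. (Alternatively both can be verified head-on by inserting the explicit expression (2), exactly as in the proof of Proposition 2.)

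The laborious step is $[K_+^d,K_-^d]=-8K_z-4d+2$. Writing $K_\pm^d=e^{\pm i\phi}M_\pm^d$ and commuting the exponentials past each other gives
\[
[K_+^d,K_-^d]=[M_+^d,M_-^d]+\Big(\tfrac{1}{\sin\theta}+\sin\theta\Big)\big(M_+^d+M_-^d\big),
\]
and each piece is evaluated with nothing more than $[\partial_\theta,\sin\theta]=\cos\theta$, $\partial_\theta\cot\theta=-\csc^2\theta$ and $\sin^2\theta+\cos^2\theta=1$. The $\partial_\theta$-terms cancel; the $\partial_\phi$-coefficient collapses, after using $1-\cos^4\theta+2\sin^2\theta+\sin^4\theta=4\sin^2\theta$, to the constant $8i$; and the multiplication terms collapse to $-4d+2$, so that $[K_+^d,K_-^d]=8i\,\partial_\phi-4d+2=-8K_z-4d+2$. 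The adjoint relations ${K_\mp^d}^\dagger=K_\pm^{d\pm2}$ follow by integrating by parts on $S^2$ against $\sin\theta\,d\theta\,d\phi$: the $\phi$-integration sends $e^{\pm i\phi}\mapsto e^{\mp i\phi}$ and $\partial_\phi\mapsto-\partial_\phi$, while integrating $\pm\cos\theta\,\partial_\theta$ by parts against $\sin\theta\,d\theta$ and rewriting $\cos2\theta/\sin\theta=\csc\theta-2\sin\theta$ shifts the constant term by exactly $\pm2$; the boundary terms vanish because $\cos\theta\sin\theta$ dies at $\theta=0,\pi$.

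Everything else is a corollary. For the Casimir operator one acts with $K_+^dK_-^d-4K_z^2-2(2d-3)K_z$ on $Y_{m+d-1}^m$: by the laddering relations the image is a scalar multiple, the product of the two radicals being $(2m+d-2)(2m+d-1)$, which combines with $-4m^2-2(2d-3)m$ to the $m$-independent value $(d-1)(d-2)$ --- equivalently one evaluates on a lowest-weight vector, where $K_-^d$ annihilates it and only the $K_z$-terms survive. For the lowest bases I would insert $Y_j^{-j}\propto e^{-ij\phi}(\sin\theta)^j$ (a specialization of (2), which is also the $su(2)$ lowest-weight vector of ${\cal H}_j$) into $K_-^{2j+1}$ and check directly that the two surviving powers of $\sin\theta$ cancel, and likewise feed $Y_k^{1-k}\propto e^{i(1-k)\phi}(\sin\theta)^{k-1}\cos\theta$ into $K_-^{2k}$; the lowest weights $-j$ and $1-k$ are then read off from $K_z$. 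Finally the algebraic construction is obtained by iterating the raising formula from the lowest basis, $Y_{m+2j}^{m}=c_m(K_+^{2j+1})^{m+j}Y_j^{-j}$ (and similarly for $d=2k$), the normalization $c_m$ being a telescoping product of the radicals that collapses to the displayed closed form.

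I expect the only real obstacle to be the bookkeeping in the commutator $[K_+^d,K_-^d]$ --- one has to trust that the trigonometric dust settles into a pure $\partial_\phi$-term plus a constant --- and, if one takes the direct route for the laddering relations, the handling of the iterated operator $\big(\tfrac{1}{\sin\theta}\tfrac{d}{d\theta}\big)^{l-m}$; conceptually everything is already contained in Propositions 1 and 2 once one notices that $l-m$ is constant on each ${\cal H}_d^+$.
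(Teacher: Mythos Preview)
Your proposal is correct and follows essentially the same route as the paper: both identify the key operators via the commutator $[L_\pm,J_\pm(l)]$ and then read off the laddering relations (20a,b) from the known actions (6) and (14), with everything else (commutation relations, Casimir eigenvalue, lowest bases, algebraic construction) obtained as you indicate. The paper packages this by naming the auxiliary operators $A_{\pm,\pm}(l):=\pm[L_\pm,J_\pm(l)]$ and stating their laddering relations $A_{+,+}(l)Y_{l-1}^{m-1}\propto Y_l^m$, $A_{-,-}(l)Y_l^m\propto Y_{l-1}^{m-1}$ directly, whereas you make explicit the additional observation that $K_+^d-A_{+,+}(l{+}1)$ annihilates ${\cal H}_d^+$; you also supply considerably more detail on the commutator $[K_+^d,K_-^d]$ and the adjoint computation than the paper, which simply asserts these are ``directly followed'' and ``easily checked''. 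One small wording issue: to obtain the action of the commutator $[L_+,J_+]$ on $Y_{l-1}^{m-1}$ you need \emph{both} orders of composition, i.e.\ (14a) followed by (6a) \emph{and} (6a) followed by (14a), not just the one you name; the coefficient in (20a) is precisely the difference $(l-m+1)-(l-m)=1$ times the common factor $\sqrt{\tfrac{2l-1}{2l+1}(l+m-1)(l+m)}$.
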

\begin{proof} The relations (18) and (20a,b) can be followed from the realization of
laddering relations with respect to both azimuthal and magnetic
quantum numbers $l$ and $m$, simultaneously and agreeably.
It is sufficient to consider that two new differential operators
\renewcommand\theequation{\arabic{equation}}
\setcounter{equation}{\value {tempeq}}
\begin{eqnarray}
A_{\pm,\pm}(l):=\pm[L_{\pm}, J_{\pm}(l)]=e^{\pm i\phi}\left(\pm \cos\theta\frac{\partial}{\partial \theta}
+\frac{i}{\sin\theta}\frac{\partial}{\partial \phi}-l\sin\theta\right)\,,
\end{eqnarray}
satisfy the simultaneous laddering relations with respect to $l$ and $m$ as
\renewcommand\theequation{\arabic{equation}}
\setcounter{equation}{\value {tempeq}}
\setcounter{tempeq}{\value{equation}}
\renewcommand\theequation{\arabic{tempeq}\alph{equation}}
\setcounter{equation}{0} \addtocounter{tempeq}{2}
\begin{eqnarray}
&&\hspace{-7mm}A_{+,+}(l)Y_{l-1}^{m-1}(\theta, \phi)=
\sqrt{\frac{2l-1}{2l+1}(l+m-1)(l+m)}\,Y_{l}^{m}(\theta, \phi) \\
&&\hspace{-7mm}A_{-,-}(l)Y_{l}^{m}(\theta, \phi)=
\sqrt{\frac{2l+1}{2l-1}(l+m-1)(l+m)}\,Y_{l-1}^{m-1}(\theta, \phi)\,.
\end{eqnarray}
The relations (26a) and (26b) are obtained from (6a), (6b), (14a)
and (14b). The relation (19) and (20c) are directly followed. The
adjoint relation between the operators can be easily checked by
means of the inner product (3). The commutativity of operators
$K_{+}^{d}$, $K_{-}^{d}$ and $K_{z}$ with ${\bf K^{d}}^2_{\bf
u(1,1)}$ is resulted from (19). The eigenequation (22) follows
immediately from the representation relations (20). The relation
(20b) implies that $Y_{j}^{-j}(\theta, \phi)$ and
$Y_{k}^{1-k}(\theta, \phi)$ are the lowest bases for the Hilbert
subspaces ${\cal H}_{2j+1}^{+}$ and ${\cal H}_{2k}^{+}$,
respectively. Then, with repeated application of the raising
relation (20a), one may obtain the arbitrary representation bases of
$u(1,1)$ Lie algebra as (24a) and (24b).
\end{proof}
Although the commutation relations (19) are not closed with respect
to taking the adjoint, however, their adjoint relations
$[K_{+}^{d+2}\, , K_{-}^{d-2}]=-8K_{z}-4d+2$ and $[K_{z}\, ,
K_{\mp}^{d\mp 2}]=\mp K_{\mp}^{d\mp 2}$ are identically satisfied.
\begin{proposition} Let us define two new first-order differential
operators on the sphere $S^2$ as
\renewcommand\theequation{\arabic{equation}}
\setcounter{equation}{\value {tempeq}}
\begin{eqnarray}
I_{\pm}^{s}=e^{\pm i\phi}\left(\pm \cos\theta\frac{\partial}{\partial \theta}
+i\left(\frac{1}{\sin\theta}+\sin\theta\right)\frac{\partial}{\partial \phi}+\left(s-\frac{1}{2}\mp\frac{1}{2}\right)\sin\theta\right)\,.
\end{eqnarray}
They, together with the generators
$I_{z}=L_{z}=-i\frac{\partial}{\partial \phi}$ and $1$ satisfy the
commutation relations of $u(1,1)$ Lie algebra as
\begin{eqnarray}
[I_{+}^{s}\, , I_{-}^{s}]=-8I_{z}+4s-2\,, \hspace{30mm}[I_{z}\, , I_{\pm}^{s}]=\pm I_{\pm}^{s}\,.
\end{eqnarray}
$I_{\pm}^{s \mp 2}$ are the adjoint of the operators $I_{\mp}^{s}$
with respect to the inner product (3), i.e. we have
${I_{\mp}^{s}}^{\dagger}=I_{\pm}^{s \mp 2}$. Each of the Hilbert
subspaces ${\cal H}_{s}^{-}$ realize separately $(s-1)$-integer
irreducible positive representations of $u(1,1)$ Lie algebra as
\renewcommand\theequation{\arabic{equation}}
\setcounter{equation}{\value {tempeq}}
\setcounter{tempeq}{\value{equation}}
\renewcommand\theequation{\arabic{tempeq}\alph{equation}}
\setcounter{equation}{0} \addtocounter{tempeq}{3}
\begin{eqnarray}
&&\hspace{-23mm}I_{+}^{s}Y_{-m+s}^{m-1}(\theta, \phi)=
\sqrt{\frac{-2m+2s+1}{-2m+2s-1}(-2m+s)(-2m+s+1)}\,Y_{-m+s-1}^{m}(\theta, \phi)  \\
&&\hspace{-23mm}I_{-}^{s}Y_{-m+s-1}^{m}(\theta, \phi)=
\sqrt{\frac{-2m+2s-1}{-2m+2s+1}(-2m+s)(-2m+s+1)}\,Y_{-m+s}^{m-1}(\theta, \phi)  \\
&&\hspace{-23mm}I_{z}Y_{-m+s-1}^{m}(\theta, \phi)=m Y_{-m+s-1}^{m}(\theta, \phi)\,.
\end{eqnarray}
Also, the Casimir operator corresponding to the generators
$I_{+}^{s}$, $I_{-}^{s}$ and $I_{z}$,
\renewcommand\theequation{\arabic{equation}}
\setcounter{equation}{\value {tempeq}}
\begin{eqnarray}
{\bf I^{s}}^2_{\bf u(1,1)}=I_{+}^{s}I_{-}^{s}-4I_{z}^{2}+2(2s+1)I_{z}\,,
\end{eqnarray}
has an infinite-fold degeneracy on the Hilbert subspace ${\cal
H}_{s}^{-}$ as
\begin{eqnarray}
{\bf I^{s}}^2_{\bf u(1,1)}Y_{-m+s-1}^{m}(\theta, \phi)=s(s+1)Y_{-m+s-1}^{m}(\theta, \phi)\,.
\end{eqnarray}
The Hilbert subspaces ${\cal H}_{s=2j+1}^-={\cal D}^{-}(j)$ and
${\cal H}_{s=2k}^-={\cal D}^{-}(k-1)$ with $j$ and $k$ as
nonnegative and positive integers contain, respectively, the
following highest bases
\renewcommand\theequation{\arabic{equation}}
\setcounter{equation}{\value {tempeq}}
\setcounter{tempeq}{\value{equation}}
\renewcommand\theequation{\arabic{tempeq}\alph{equation}}
\setcounter{equation}{0} \addtocounter{tempeq}{3}
\begin{eqnarray}
&&\hspace{-23mm}Y_{j}^{j}(\theta, \phi)=
\frac{(-1)^{j}}{2^{j}\Gamma(j+1)}\sqrt{\frac{\Gamma(2j+2)}{4\pi}}\,e^{ij\phi}
\left(\sin\theta\right)^j \\
&&\hspace{-23mm}Y_{k}^{k-1}(\theta, \phi)=\frac{(-1)^{k-1}}{2^{k-\frac{1}{2}}\Gamma(k+1)}\sqrt{\frac{(2k+1)\Gamma(2k)}{2\pi}}\,e^{i(k-1)\phi}
\left(\sin\theta\right)^{k-1}\cos\theta\,.
\end{eqnarray}
They are annihilated as $I_{+}^{2j+1}Y_{j}^{j}(\theta, \phi)=0$ and
$I_{+}^{2k}Y_{k}^{k-1}(\theta, \phi)=0$, and also have the highest
weights $j$ and $k-1$. Meanwhile, the arbitrary bases of the Hilbert
subspaces ${\cal H}_{s=2j+1}^-$ and ${\cal H}_{s=2k}^-$ can be
respectively calculated by the algebraic methods as
\renewcommand\theequation{\arabic{equation}}
\setcounter{equation}{\value {tempeq}}
\setcounter{tempeq}{\value{equation}}
\renewcommand\theequation{\arabic{tempeq}\alph{equation}}
\setcounter{equation}{0} \addtocounter{tempeq}{1}
\begin{eqnarray}
&&\hspace{-23mm}Y_{2j-m}^{m}(\theta, \phi)=
\frac{\left(I_{-}^{2j+1}\right)^{j-m}Y_{j}^{j}(\theta, \phi)}
{\sqrt{\frac{(2j+1)\Gamma(2j-2m+1)}{4j-2m+1}}} \hspace{30mm}m\leq j  \\
&&\hspace{-23mm}Y_{2k-m-1}^{m}(\theta, \phi)=
\frac{\left(I_{-}^{2k}\right)^{k-m-1}Y_{k}^{k-1}(\theta, \phi)}
{\sqrt{\frac{(2k+1)\Gamma(2k-2m)}{4k-2m-1}}} \hspace{20mm}m\leq k-1\,.
\end{eqnarray}
\end{proposition}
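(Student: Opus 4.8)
The plan is to rerun the proof of Proposition 3 with the ``aligned'' simultaneous ladder operators $A_{\pm,\pm}(l)=\pm[L_\pm,J_\pm(l)]$ of (25) replaced by the ``mixed'' ones $B_{\pm,\mp}(l):=\mp[L_\pm,J_\mp(l)]$; equivalently, to transcribe Proposition 3 under complex conjugation composed with $m\mapsto -m$. First I would compute $B_{\pm,\mp}(l)$ from the explicit forms (4) and (12); the second-order parts cancel and one is left with the first-order operators
\[
B_{\pm,\mp}(l)=e^{\pm i\phi}\left(\pm\cos\theta\,\frac{\partial}{\partial\theta}+\frac{i}{\sin\theta}\,\frac{\partial}{\partial\phi}+l\sin\theta\right),
\]
which are exactly the conjugates $-\overline{A_{\mp,\mp}(l)}$ (coefficient conjugation) of the operators used for Proposition 3. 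Because $J_\mp(l)$ shifts $l$ by $\mp 1$ while $L_\pm$ shifts $m$ by $\pm 1$, the operator $B_{+,-}(l)$ carries $Y_l^{m-1}$ into a multiple of $Y_{l-1}^{m}$ and $B_{-,+}(l)$ carries $Y_{l-1}^{m}$ into a multiple of $Y_l^{m-1}$, so that $l+m$ is conserved. Composing the single-index relations (6a), (6b), (14a), (14b) and collapsing the product of radicals --- the two contributions differ only by the additive unit $(l+m)-(l+m-1)=1$ --- yields the simultaneous laddering relations $B_{+,-}(l)Y_l^{m-1}=\sqrt{\frac{2l+1}{2l-1}(l-m+1)(l-m)}\;Y_{l-1}^{m}$ and $B_{-,+}(l)Y_{l-1}^{m}=\sqrt{\frac{2l-1}{2l+1}(l-m+1)(l-m)}\;Y_l^{m-1}$, the exact counterpart of (26).

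Next I would pass from the $l$-labelled operators $B_{\pm,\mp}(l)$ to the $s$-labelled operators $I_\pm^s$ of (27). On ${\cal H}_s^{-}$ every basis vector obeys $l+m=s-1$; writing $l$ in terms of $s$ and the $L_z$-eigenvalue of the vector acted upon, and using that on $L_z$-eigenvectors $-L_z\sin\theta$ acts as the operator $i\sin\theta\,\partial/\partial\phi$, the term $l\sin\theta$ detaches a summand $i\sin\theta\,\partial/\partial\phi$ that joins $\frac{i}{\sin\theta}\partial/\partial\phi$ into $i(\frac{1}{\sin\theta}+\sin\theta)\partial/\partial\phi$ and leaves the constant $(s-1)\sin\theta$ in the $B_{+,-}$ case and $s\sin\theta$ in the $B_{-,+}$ case --- precisely the $(s-\frac{1}{2}\mp\frac{1}{2})\sin\theta$ of (27). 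Hence $I_\pm^s$ agrees with $B_{\pm,\mp}$ on ${\cal H}_s^{-}$, so the representation relations (29a), (29b) are just the laddering relations above subject to $l+m=s-1$, while (29c) is (6c). The weight relation $[I_z,I_\pm^s]=\pm I_\pm^s$ is read off from the $e^{\pm i\phi}$ prefactor, and the adjointness ${I_\mp^s}^{\dagger}=I_\pm^{s\mp 2}$ is checked against $d\Omega(\theta,\phi)$ exactly as for $K_\pm^d$.

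The only step that needs a genuine computation is the bracket $[I_+^s,I_-^s]=-8I_z+4s-2$; being a commutator of first-order operators it is again first order, and one must see that all $\partial/\partial\theta$- and $\partial^2/\partial\theta^2$-terms cancel, leaving only $8i\,\partial/\partial\phi+4s-2$. The cheapest route is to note (by comparing (17) with (27)) that $I_+^s=-\overline{K_-^s}$ and $I_-^s=-\overline{K_+^s}$, so that $[I_+^s,I_-^s]=\overline{[K_-^s,K_+^s]}=-\overline{[K_+^s,K_-^s]}$; inserting (19) together with $\overline{K_z}=-K_z$ gives exactly $-8K_z+4s-2=-8I_z+4s-2$. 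Alternatively one substitutes (27) and computes directly. I expect the bulk of the work to sit here. Everything afterwards is formal: the Casimir operator ${\bf I^{s}}^2_{\bf u(1,1)}$ of (30) commutes with all generators by (28), and one iteration of (29a), (29b) gives $I_+^sI_-^s\,Y_{-m+s-1}^m=(-2m+s)(-2m+s+1)\,Y_{-m+s-1}^m$, so the Casimir eigenvalue equals $(-2m+s)(-2m+s+1)-4m^2+2(2s+1)m=s(s+1)$, which is (31).

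Finally, for the extremal vectors I would apply $I_+^{2j+1}$ to $Y_j^j$ and $I_+^{2k}$ to $Y_k^{k-1}$ in the form (27): in each case the $\theta$-dependent bracket collapses through $\cos^2\theta-1=-\sin^2\theta$ to zero, consistent with the vanishing at the top weight of the radical $\sqrt{(-2m+s)(-2m+s+1)}$ in (29a). Since $I_z=L_z$ has simple spectrum on ${\cal H}_s^{-}$, repeated application of the lowering operator $I_-^s$ to the highest-weight vector exhausts the subspace and, together with the normalisations (33a), (33b) obtained by multiplying the successive coefficients in (29b), establishes both the cyclicity (hence irreducibility) of ${\cal H}_s^{-}$ and the stated algebraic formulas. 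More globally, conjugating every relation of Proposition 3 and relabelling $Y_l^m\mapsto(-1)^mY_l^{-m}$, $K_\pm^d\mapsto-\overline{I_\mp^s}$ with $s=d$ reproduces (27)--(33) verbatim, so Proposition 4 can in fact be deduced from Proposition 3 with no new computation at all; the construction via $B_{\pm,\mp}(l)$ merely spells this out in the language of the angular-momentum and $l$-shift operators.
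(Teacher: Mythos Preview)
Your approach is essentially the paper's own: the operators you call $B_{\pm,\mp}(l)=\mp[L_\pm,J_\mp(l)]$ are exactly the paper's $A_{\mp,\pm}(l)$ of (34), and your laddering relations for them coincide (after the shift $l\to l-1$) with (35a,b), from which the paper derives the rest ``quite similar[ly] to the proof of Proposition 3.'' Your proposal is simply more explicit---spelling out the passage from $B_{\pm,\mp}(l)$ to $I_\pm^s$, the Casimir eigenvalue, and the highest-weight annihilation---and adds the pleasant observation that the conjugation $I_\pm^s=-\overline{K_\mp^s}$ lets one import (19) directly to obtain (28) without any fresh commutator computation.
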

\begin{proof} The proof is quite similar to the proof of the Proposition 3. So, we
have to take into account that the two new differential operators
\renewcommand\theequation{\arabic{equation}}
\setcounter{equation}{\value {tempeq}}
\begin{eqnarray}
A_{\mp,\pm}(l):=\mp[L_{\pm}, J_{\mp}(l)]=e^{\pm i\phi}\left(\pm \cos\theta\frac{\partial}{\partial \theta}
+\frac{i}{\sin\theta}\frac{\partial}{\partial \phi}+l\sin\theta\right)\,,
\end{eqnarray}
are represented by spherical harmonics whose corresponding laddering equations shift both the
azimuthal and magnetic quantum numbers $l$ and $m$ simultaneously and inversely:
\renewcommand\theequation{\arabic{equation}}
\setcounter{equation}{\value {tempeq}}
\setcounter{tempeq}{\value{equation}}
\renewcommand\theequation{\arabic{tempeq}\alph{equation}}
\setcounter{equation}{0} \addtocounter{tempeq}{2}
\begin{eqnarray}
&&\hspace{-20mm}A_{-,+}(l+1)Y_{l+1}^{m-1}(\theta, \phi)=
\sqrt{\frac{2l+3}{2l+1}(l-m+1)(l-m+2)}\,Y_{l}^{m}(\theta, \phi) \\
&&\hspace{-20mm}A_{+,-}(l+1)Y_{l}^{m}(\theta, \phi)=
\sqrt{\frac{2l+1}{2l+3}(l-m+1)(l-m+2)}\,Y_{l+1}^{m-1}(\theta,
\phi)\,.
\end{eqnarray}
\end{proof}
Here, again the adjoint of commutation relations (28) becomes
$[I_{+}^{s-2}\, , I_{-}^{s+2}]=-8I_{z}+4s-2$ and $[I_{z}\, , I_{\mp}^{s\pm 2}]=\mp I_{\mp}^{s\pm 2}$,
which are identically satisfied.

Thus, all unitary and irreducible representations of $su(2)$ of
dimensions $2l+1$ with the nonnegative integers $l$ can carry the new kind of
irreducible representations for $u(1,1)$.
The new symmetry structures presented in the two recent Propositions, the
so-called positive and negative discrete representations of
$u(1,1)$, in turn, describe the simultaneous quantization of the
azimuthal and magnetic quantum numbers.
Therefore, the Hilbert space of all spherical harmonics
not only represent compact Lie algebra $su(2)$ by ladder operators
shifting $m$ for a given $l$, but also represent the noncompact Lie
algebra $u(1,1)$ by simultaneous shift operators of both quantum
labels $l$ and $m$ for given values $l-m$ and $l+m$.

\section{Concluding remarks}
For a given azimuthal quantum number $l$, quantization of the
magnetic number $m$ is customarily accomplished by representing the
operators $L_{+}$, $L_{-}$ and $L_{3}$ on the sphere with the
commutation relations $su(2)$ compact Lie algebra, in a
$(2l+1)$-dimensional Hilbert subspace ${\cal H}_{l}$. Furthermore,
for a given magnetic quantum number $m$, quantization of the
azimuthal number $l$ is accomplished by representing the operators
$J_{+}(l)$ and $J_{-}(l)$ on the sphere $S^2$ with the identity
relation (13), in an infinite-dimensional Hilbert subspace ${\cal
H}_{m}$.

Dealing with these issues together,
simultaneous quantization of both azimuthal and magnetic numbers $l$ and $m$
is accomplished by representing two bunches of operators $\{K_{+}^{d}, K_{-}^{d}, K_{3}, 1\}$
and $\{I_{+}^{s}, I_{-}^{s}, I_{3}, 1\}$ on the sphere with their corresponding commutation
relations of $u(1,1)$ noncompact Lie algebra, in the infinite-dimensional Hilbert subspaces
${\cal H}_{d}^{+}$ and ${\cal H}_{s}^{-}$, respectively.
For given values $d=l-m+1$ and $s=l+m+1$, they are independent of each other, the so-called
positive and negative $(l-m)$- and $(l+m)$-integer irreducible representations, respectively.
As the spherical harmonics are generated
from $Y_{l}^{\mp l}(\theta, \phi)$ by the operators $L_{\pm}$, they
are also generated from $Y_{j}^{-j}(\theta, \phi)$ and $Y_{k}^{1-k}(\theta, \phi)$ by
$K_{+}^{2j+1}$ and $K_{+}^{2k}$, as well as from $Y_{j}^{j}(\theta, \phi)$ and
$Y_{k}^{k-1}(\theta, \phi)$ by $I_{-}^{2j+1}$ and $I_{-}^{2k}$, respectively.
Therefore, not only  $Y_{l}^{m}(\theta, \phi)$'s with the given value for $l$ represent $su(2)$
Lie algebra, but also $Y_{l}^{m}(\theta, \phi)$'s with the given values for subtraction and summation
of the both quantum numbers $l$ and $m$ represent separately $u(1,1)$ (hence, $su(1,1)$) Lie algebra as well.
In other words, two different real forms of $sl(2,c)$ Lie algebra, i.e. $su(2)$ and $su(1,1)$,
are represented by the space of all spherical harmonics $Y_{l}^{m}(\theta, \phi)$.
This happens because the quantization of both quantum numbers $l$ and $m$ are considered jointly.
Indeed, we have
$$ \mbox{Propositions 1 and 2} \Longleftrightarrow \mbox{Propositions 3 and 4.}$$

We point out that the idea of this manuscript may find interesting
applications in quantum devices. For instance, coherent states of
the $SU(1,1)$ noncompact Lie group have been defined by Barut and
Girardello as eigenstates of the ladder operators \cite{Barut}, and
by Perelomov as the action of the displacement operator on the
lowest and highest bases \cite{Perelomov1,Perelomov2}. So, our
approach to the representation of $su(1,1)$ noncompact Lie algebra
provides the possibility of constructing two different types of
coherent states of $su(1,1)$ on compact manifold $S^2$
\cite{Fakhri5}. Also, realization of the additional symmetry named
$su(2)$ Lie algebra for Landau levels and bound states of a free
particle on noncompact manifold $AdS_2$ can be found based on the
above considerations.

\end{document}